% EPL observables

\documentclass[12pt]{article}

\usepackage{epsfig}

\usepackage{amssymb}
\usepackage{amsfonts}

\usepackage{color}
%%%%
 
%
 %%%    \setlength{\headheight}{-10pt}
 %%%
 \oddsidemargin = -0.5cm
 \evensidemargin = 0cm
 \textwidth = 17.5cm
   \setlength{\headheight}{0pt}
%%%
 \setlength{\headsep}{0pt}
 %%%    \setlength{\textwidth}{460pt}

     %
      %%%%%%%%  
                                   \setlength{\textheight}{620pt}

     %                                                \setlength{\textheight}{660pt}

 %%      \setlength{\textheight}{790pt}
 % \setlength{\textheight}{340pt}

 %%                               \setlength{\textheight}{340pt}

 %%                               \setlength{\textheight}{420pt}

%\usepackage[notref,notcite]{showkeys}
  % show labels (testing phase)

%
%\usepackage{graphicx}% Include figure files
%\usepackage{dcolumn}% Align table columns on decimal point
%\usepackage{bm}
%\usepackage{mje}
%\usepackage{hyperref}
%\usepackage[utf8]{inputenc}
%\usepackage{upgreek}
%
%\usepackage{tikz}
%\usetikzlibrary{positioning,shapes}

\def\be{\begin{equation}}
\def\ee{\end{equation}}
\def\ba{\begin{array}{c}}
\def\ea{\end{array}}

\newcommand{\bea}{\begin{eqnarray}}
\newcommand{\eea}{\end{eqnarray}}

\newcommand{\kt}{\rangle}
\newcommand{\br}{\langle}

\newcommand{\kkt}{\kt\!\kt}

\newtheorem{thm}{Theorem}

\newtheorem{lemma}[thm]{Lemma}

\newenvironment{proof}{\noindent
 {\bf Proof.}}{\hfill$\square$\vspace{3mm}\endtrivlist}

\begin{document}

\begin{center}

{\Large \bf

Factorized Hilbert-space metrics

and

non-commutative
quasi-Hermitian
observables

}

\vspace{10mm}

\textbf{Miloslav Znojil}

\vspace{0.2cm}

\vspace{0.2cm}

The Czech Academy of Sciences, Nuclear Physics Institute,

 Hlavn\'{\i} 130,
250 68 \v{R}e\v{z}, Czech Republic

\vspace{0.2cm}

 and

\vspace{0.2cm}

Department of Physics, Faculty of Science, University of Hradec
Kr\'{a}lov\'{e},

Rokitansk\'{e}ho 62, 50003 Hradec Kr\'{a}lov\'{e},
 Czech Republic

\vspace{0.2cm}

 and

\vspace{0.2cm}

Institute of System Science, Durban
University of Technology, Durban, South
Africa

\vspace{0.2cm}

{e-mail: znojil@ujf.cas.cz}

%%\newpage

\end{center}

%\newpage

\section*{Abstract}

In 1992, Scholtz et al (Ann. Phys. 213, p. 74) showed that a set of
non-Hermitian operators can represent observables of a closed
unitary quantum system, provided only that its elements are
quasi-Hermitian (i.e., roughly speaking, Hermitian with respect to
an {\it ad hoc} inner-product metric). We show that such a version
of quantum mechanics admits a simultaneous closed-form
representation of the metric $\Theta_N$ and of the observables
$\Lambda_k$, $k=0,1,\ldots,N+1$ in terms of auxiliary operators
$Z_k$ with $k=0,1,\ldots,N$. At $N=2$ the formalism degenerates to
the well known ${\cal PT}-$symmetric quantum mechanics using
factorized metric $\Theta_2=Z_2Z_1$ where $Z_2={\cal P}$ is parity
and where $Z_1={\cal C}$ is charge.

\subsection*{Keywords}

.

non-standard
descriptions of dynamics
in  Schr\"{o}dinger picture;

apparently non-Hermitian generators of
unitary evolution;

quasi-Hermitian quantum systems with non-commuting
observables;

models with factorized physical Hilbert-space metrics;

\newpage

\section{Introduction}

${\cal PT}-$symmetric quantum mechanics of review \cite{Carl}
offers one of the best known
examples of
technical and conceptual merits of
the use of
a nontrivial Hilbert-space inner-product metric  $\Theta$
which is assumed factorized,
 \be
 \Theta_{\cal PT}={\cal PC} \neq I\,.
 \label{fafa}
 \ee
The
underlying abstract
formulations of quantum
mechanics using any nontrivial
inner-product metric  $\Theta \neq I$
are known as quasi-Hermitian quantum mechanics
\cite{Geyer},
or as pseudo-Hermitian  quantum mechanics
\cite{ali}.
Once one adds the specific factorization ansatz~(\ref{fafa}),
operators ${\cal P}$ and ${\cal C}$
are interpreted,
most often, as parity
and charge, respectively. Still,
also certain less specific forms
of the factors ${\cal P}$ and ${\cal C}$ forming the metric
can be found discussed in the literature \cite{Ali,shendr,Carlbook,PTCCM}.
In our present letter we intend to 
reveal and describe an unexpected new connection between the 
abstract
mathematics of the Hilbert-space-metric
factorization as sampled by Eq.~(\ref{fafa})
and the realistic requirements in 
physics where one often has to demand the generic 
consistent coexistence of several non-commutative
non-Hermitian quantum
observables (say, $\Lambda_j$).

\subsection{The challenge of non-commutativity}

 \noindent
The
challenging nature of all of the
mutually more or less equivalent
approaches
{to quantum dynamics based on
  the nontriviality of the physical Hilbert-space metric (\ref{fafa})
  }
can be illustrated by the
Bagchi's and Fring's
conjecture \cite{framew}
that the use of such a formalism
could enrich even the study of
quantized gravity.
The latter authors argued that
the use of Hilbert spaces
with $\Theta \neq I$
could lead to a new and
consistent description of quantum systems
which admit the existence of
{an observable} minimal length and/or
of some innovative forms
of {experimentally verifiable} uncertainty relations.

{In the latter, most ambitious  
physical project the main technical obstacle 
has later been found to lie in
the necessity of reflecting the
non-commutative nature of the sets of the relevant 
observables \cite{arabky}.
In fact, this was bad news which
still belong among the key motivating forces in
our present letter. The more so because
earlier, 
the obstacle has already been encountered as serious
in the framework of the abstract theory \cite{Geyer}
as well as
in several very concrete calculations
and pragmatic studies, say, in 
condensed matter physics \cite{Dyson} or in
nuclear physics
\cite{Jenssen}.}

{At present, the technical subtlety
of the non-Hermitian non-commutativity
still belongs,
in spite of an enormous recent progress
in the field, among the 
main limiting factors and open questions
in the otherwise highly promising and fairly rapid 
developments in the field
of the non-Hermitian coupled-cluster method 
(see, e.g., the recent progress reports in \cite{PTCCM,Acta}), etc.
}

\subsection{The challenge of technical feasibility}

  \noindent
In general, it is really tempting to expect that
using the quasi-Hermitian (QH) quantum theory
one could really achieve
a conceptual compatibility
between a manifestly non-Hermitian
operator of
momentum (say, $\Lambda_0$) with another,
still non-Hermitian
operator representing position
(say, $\Lambda_1$), etc.
The more explicit
technical realization of such an idea and project
appeared, unfortunately, more difficult than expected.
In \cite{arabky}, in particular,
we demonstrated that
``whenever we are given
more than one candidate for an observable (i.e., say, two operators
$\Lambda_0$ and $\Lambda_1$) in advance'',
the
consistent QH theory
``need not exist in general''.
The methodical as well as phenomenological
scepticism implied by such a discouraging result
is to be weakened in our present letter.

Our present encouragement of a return to optimism
will
rely upon letter \cite{PLA}
in which we proposed a specific
reformulation
of the
${\cal PT}-$symmetric quantum mechanics
of Ref.~\cite{Carl}.
Moreover,
we will also incorporate the results of
paper \cite{EPJP} where we generalized
ansatz (\ref{fafa}) and where we considered
a more flexible version of the QH quantum theory
which admits any number $N$ of
factors forming the metric,
 \be
 \Theta_{N}=Z_{{{N}}}\,Z_{{{N-1}}}\,\ldots\,Z_{2}\,Z_1\,.
  \label{kamenK}
 \ee
After
the transition from the specific ansatz (\ref{fafa})
to its general form (\ref{kamenK}),
and after a brief outline of our basic idea in section \ref{II},
our present main message
will concern the
metric-factorization-related systematic construction of the
sets
 \be
 \Lambda_0\,,\,
 \Lambda_1\,,\,
 \Lambda_2\,,\,
 \ldots\,
 \,
 \label{phph}
 \ee
of the
admissible non-Hermitian (i.e., QH) operators representing the observables.
The details of the construction
will be formulated in sections \ref{III} -- \ref{IVb} and summarized
in section \ref{IV}.

\section{The concept of physical inner-product metric\label{II}}

The origin of the idea of the possible
usefulness and consistency of quantum mechanics
using a nontrivial operator metric $\Theta \neq I$
can be traced back to the year 1992 when the authors
of review paper
\cite{Geyer}
declared
the conventional use of trivial metrics $\Theta_{textbook}=I$
``somewhat restrictive''.
They showed that under suitable mathematical conditions a broader
class of ``consistent quantum mechanical systems'' can be
described using a certain ``set of non-Hermitian operators''
representing observable quantities.
In other words, these operators (see Eq.~({\ref{phph}}) above)
had to be compatible with the
metric-dependent observability {\it alias\,}
quasi-Hermiticity condition
 \be
 \Lambda_k^\dagger\,\Theta = \Theta\,\Lambda_k\,,
 \ \ \ \ \ k=0,1,\ldots
 \,.
 \label{atreat}
 \ee
In this light,
the ${\cal PT}-$symmetric quantum mechanics
can be perceived as
one of the most successful
realizations of the amended formalism
with the formal and trivial observable $\Lambda_0=I$,
with
the observable charge ${\cal C}=\Lambda_1$ and
metric ${\cal PC}=\Lambda_2$ of Eq.~(\ref{fafa}),
and,
for unitary systems, also with
the observable QH Hamiltonian,
$\Lambda_3=H$.

\subsection{Inspiration: Antilinear symmetries}

The inspiration of our present letter
may be traced back to
the Bender's and Boettcher's
conjecture
of an upgrade of
quantum theory \cite{BB} which
made the abstract QH formalism
user-friendlier.
The goal has been achieved
via a rather unconventional
assumption that
a given non-Hermitian
Hamiltonian $H$ can be made acceptable
when required to
exhibit certain auxiliary antilinear symmetries.

These symmetries were of two forms.
One was the
parity-time-reversal
antilinear symmetry {\it alias\,} ${\cal PT}-$symmetry
 \be
 H{\cal PT} = {\cal PT}H\,.
 \label{PTsymm}
 \ee
It combined,
successfully, a phenomenological appeal
of the linear parity $ {\cal P}$ with the antilinear
time-reversal $ {\cal T}$.
In comparison with the abstract QH approach of Scholtz et al \cite{Geyer},
the updated theory based on the {\it ad hoc\,}
technical assumption (\ref{PTsymm})
proved much more
intuitive and more friendly in applications (see, e.g., the collection of
reviews in \cite{Carlbook} for details).

%%\newpage

Another, less well known but
much more fundamental antilinear symmetry of the Hamiltonian has
been found in its
parity-charge-time {\it alias\,} ${\cal PCT}$
symmetry \cite{BBJ},
 \be
 H\,{\cal P\,C\,T} = {\cal P\,C\,T}\,H\,.
 \label{pctsy}
 \ee
In a way explained in \cite{ali}
the
action of the time-reversal operator
${\cal T}$ in both of these antilinear symmetries
has precisely the same meaning as
Hermitian conjugation. This means
that Eq.~(\ref{pctsy})
can be given the following equivalent form
 \be
 H^\dagger\,{\cal P\,C} = {\cal P\,C}\,H\,.
 \label{hpctsy}
 \ee
In what follows we will prefer the latter notation convention.

\subsection{Physics: Quasi-Hermitian observables}

Before we proceed to the description of details
let us point out that
the originally purely
formal nature of ansatz (\ref{kamenK})
as introduced in \cite{EPJP}
will be complemented here by the
turn of attention to physics.
The QH quantum dynamics will be assumed based,
in the spirit of Ref.~\cite{Geyer},
not only on the specification of the Hamiltonian $H$
but also
of a multiplet of candidates
for the other, phenomenologically potentially relevant
observables.

Our attention will be paid to
a constructive guarantee that all of the latter operators (\ref{phph})
admit the standard probabilistic experimental interpretation.
According to assumptions as formulated in \cite{Geyer},
our knowledge of $H$ and of the set (\ref{phph}) has to
represent a dynamical input information about the system in question.
All of these operators will have to
satisfy the quasi-Hermiticity
relations
 \be
 \Lambda^\dagger_\ell\,\Theta_N=\Theta_N\,\Lambda_\ell\,,
 \ \ \ \ \ \ell = 0,1, 2, \ldots\,.
 \label{treat}
 \ee
of course.
In paper~\cite{Geyer} we read the reason:
``for the given set of non-Hermitian observables'',
the essence of the QH model-building recipe
lies
in the ``construction of a metric (if it exists)''.
In this context,
the additional, more specific recommendation of the construction
of the metric
as inspired by Eq.~(\ref{fafa}) and
as generalized  in paper \cite{EPJP}
can be read as
a requirement that the metric entering the hidden-Hermiticity
condition (\ref{atreat})
should have
a factorized form.
In our present letter we are going to turn attention
to physics behind such a factorization.

\section{Framework: QH quantum mechanics\label{III}}

In the two most elementary special cases with $N \leq 1$
there is in fact no factorization.
At $N=0$, in particular, the degenerate factorization ansatz (\ref{kamenK})
yields trivial $\Theta_0=I$. This choice
leads to the conventional
quantum mechanics of textbooks. Relation (\ref{treat})
then means that
all of the admissible and eligible observables (\ref{phph})
must be self-adjoint. Their selection (hinted, typically, by the
hypothetical
quantum-classical correspondence \cite{Messiah})
remains formally unconstrained.

\subsection{Bra-ket notation}

At any $N>0\,$ the acceptability of operators (\ref{phph})
is only constrained
by their QH property (\ref{treat}).
One might recall paper \cite{EPJP} and,
using the terminology of functional analysis,
notice that
the traditional $N=0$ Hilbert space of
textbooks
becomes split into the doublet of the two
formally non-equivalent Hilbert spaces ${\cal H}_{phys}$ and ${\cal H}_{math}$
(in \cite{EPJP} we recommended the abbreviations
${\cal H}_{phys}={\cal R}_{0}$ and, at any preselected $N$,
${\cal H}_{math}={\cal R}_{N}$).
The latter space is to be interpreted
as auxiliary, strongly preferred in calculations but
manifestly unphysical. Only
the former Hilbert space ${\cal R}_{0}$
provides the correct probabilistic picture of the quantum system in question.

The less friendly space ${\cal R}_{0}$ differs from
its partner ${\cal R}_{N}={\cal H}_{math}$
by the less elementary inner product,
 \be
 (\psi_a,\psi_b)_{phys}=
 (\psi_a,\Theta\,\psi_b)_{math}\,.
 \label{inprodt}
 \ee
This means that the Dirac's bra-ket notation must be used here with due care.
In what follows, it will still be used
in
the preferred manipulation space ${\cal R}_{N}={\cal H}_{math}$,
 \be
 (\psi_a,\psi_b)_{math}=\br \psi_a|\psi_b\kt\,.
 \ee
The transition to
${\cal R}_{0}={\cal H}_{phys}$ can be then
represented, at any $N$, by formula
 \be
 (\psi_a,\psi_b)_{phys}=\br \psi_a|\Theta_N|\psi_b\kt\,.
 \ee
This means that the measurable predictions
(i.e., the relevant physical matrix elements)
can always be evaluated without ever leaving the user-friendlier
representation space ${\cal R}_{N}={\cal H}_{math}$.

\subsection{The choice of $N=1$ and the QH formalism without factorization}

One returns to
the standard QH formulation of quantum mechanics
when choosing
$N=1$.
In its framework
one deals with a nontrivial Hermitian
and positive definite metric
$\Theta_1$ {\it alias\,} $Z_1$ which is
self-adjoint (i.e.,
$Z_1=Z_1^{\dagger}$)
and, in principle,
eligible as an
observable (so that we can set $\Lambda_1=Z_1$ in (\ref{phph})).
Such an operator becomes nontrivial (i.e., $Z_1 \neq I$) whenever
our preselected observable
quantum Hamiltonian $H$  {\it alias\,} $Z_0$
{\it alias\,} $\Lambda_2$
is chosen non-Hermitian,  $Z_0\neq Z_0^\dagger$. Naturally, as long as the
Hamiltonian carries
information about the closed-system dynamics,
one must know or prove that
its
spectrum
is real, i.e.,
compatible with the unitarity of the evolution.

Once such a spectrum is shown
real (and also discrete, see
the reasons discussed in \cite{scatt}), one arrives at
the fundamental
(one could say  Dieudonn\'{e}'s \cite{Dieudonne})
Hamiltonian-hidden-Hermiticity condition (\ref{treat}),
 \be
 %Z_1^{\dagger}=Z_1\,,\ \ \
  Z_0^\dagger\,Z_1=Z_1\,Z_0\,,\ \ \ \ \ \ \ N=1\,.
 \label{guano}
 \ee
This condition becomes tractable as
an equation to be solved. All of its Hamiltonian-dependent solutions
$\Theta_1=Z_1(Z_0)$
are the formally eligible
metric operators
(see the list of their necessary mathematical properties
as listed in equation Nr. (2.1) in
\cite{Geyer}).
Any one of these Hilbert-space metrics
defines
a different, conceptually consistent quantum system.
All of them would lead to the experimental predictions in a way
which would vary with the different choices of the
sets of the operators (\ref{phph})
representing measurable quantities. In this setting it
makes sense to prove the following simple but important result.

\begin{lemma}
The knowledge of Hilbert-space metric $\Theta=\Theta^\dagger$
enables us to define all of the eligible observables as the operator products
 \be
 \Lambda=M\,\Theta
 \label{insaal}
 \ee
with arbitrary $M=M^\dagger$.
\end{lemma}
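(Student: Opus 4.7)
The plan is to establish the lemma as a biconditional: given the quasi-Hermiticity relation $\Lambda^\dagger\Theta=\Theta\Lambda$ of (\ref{treat}), together with the metric's self-adjointness and (from the catalogue of admissible-metric properties recalled from \cite{Geyer} just above the statement) its invertibility, I would show both that every $\Lambda=M\Theta$ with $M=M^\dagger$ satisfies (\ref{treat}), and that conversely every $\Lambda$ satisfying (\ref{treat}) admits a unique representation of this form for some self-adjoint $M$. Both halves reduce to one-line algebraic manipulations, so the task is really one of bookkeeping rather than of producing a new idea.

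The sufficiency direction is a direct computation. Using $\Theta^\dagger=\Theta$ and $M^\dagger=M$, I would write $\Lambda^\dagger\Theta=(M\Theta)^\dagger\Theta=\Theta^\dagger M^\dagger\Theta=\Theta M\Theta$, and compare with $\Theta\Lambda=\Theta M\Theta$; the two sides coincide, so (\ref{treat}) holds. For the necessity direction, the natural candidate is to set $M:=\Lambda\Theta^{-1}$, which immediately yields the factorization $\Lambda=M\Theta$. Its self-adjointness is then checked by the chain $M^\dagger=(\Lambda\Theta^{-1})^\dagger=\Theta^{-1}\Lambda^\dagger=\Theta^{-1}(\Theta\Lambda\Theta^{-1})=\Lambda\Theta^{-1}=M$, where the middle equality is just the quasi-Hermiticity relation rewritten as $\Lambda^\dagger=\Theta\Lambda\Theta^{-1}$. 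Uniqueness of $M$ is automatic, since $\Lambda=M'\Theta=M\Theta$ combined with the invertibility of $\Theta$ forces $M'=M$.

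The only point that asks for a bit of care, rather than a genuine obstacle, is the invocation of $\Theta^{-1}$: one needs the metric to be invertible, which is part of the standard list of properties of an admissible $\Theta$ already in force (cf.\ the properties quoted from equation (2.1) of \cite{Geyer}). In finite dimensions no further caveat is needed; for unbounded operators one would additionally have to verify that the domains of $\Lambda$, $\Lambda^\dagger$ and $\Theta^{\pm 1}$ are compatible so that the chain defining $M^\dagger$ is everywhere meaningful, but the paper implicitly restricts attention to a setting where these compatibility issues do not arise. With these standing assumptions in place, the proof collapses to the two short computations above.
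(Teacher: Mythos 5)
Your proof is correct and follows essentially the same route as the paper: insert $\Lambda=M\Theta$ into the quasi-Hermiticity condition and use the Hermiticity and invertibility of $\Theta$ to reduce it to $M^\dagger=M$. If anything, your write-up is slightly more complete, since you make explicit the converse direction (setting $M:=\Lambda\,\Theta^{-1}$ and checking its self-adjointness) that justifies the word ``all'' in the statement, whereas the paper leaves that step implicit in the remark about invertibility.
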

\begin{proof}
Referring to the list of the necessary mathematical properties
of the Hilbert-space metrics (see, e.g., \cite{Geyer}),
and having in mind, for the sake of the simplicity of the proof, just the special
models living in finite-dimensional Hilbert spaces,
we leave, to the readers,
all of the necessary care to be paid,
in the general case, to the domains of
the operators, etc
(see, e.g., the dedicated reviews in \cite{book}).
Then, once we insert expression (\ref{insaal})
in the criterion (\ref{treat}) of observability
 $$
 \Lambda^\dagger\,\Theta=\Theta\,\Lambda
 $$
we obtain relation
 $$
 \Theta^\dagger\,M^\dagger\,\Theta=\Theta\,M\,\Theta
 $$
and recall the Hermiticity and invertibility of the metric.
\end{proof}

\section{Quantum models with parity-times-charge metrics\label{IVx}}

At {$N=2$}, the QH-based quantum mechanics is described and discussed in
\cite{PLA}. In factorization formula (\ref{kamenK})
the two new operators
can be interpreted
as traditional
charge ($Z_1={\cal C}$) and parity ($Z_2={\cal P}$),
both, possibly, generalized \cite{shendr}.
In the language of physics the theory
relies upon
a pre-selection of Hamiltonian
$H$
denoted here, alternatively, by the
zero-subscripted symbol $Z_0$
and determining the quantum evolution dynamics
in the QH Schr\"{o}dinger picture,
 \be
 {\rm i}\frac{d}{dt}\,|\psi(t)\kt=Z_0\,|\psi(t)\kt\,,
 \ \ \ \ \ |\psi(t)\kt \in {\cal H}_{math}\,.
 \ee
In \cite{PLA} the corresponding quantum theory with $N=2$ was
dubbed ``intermediate-space Schr\"{o}dinger picture'' (ISP).
From the point of view of mathematics
its three operators $Z_k$ were shown
constrained, solely, by the
triplet of relations
  \be
  Z_0^{\dagger}\,(Z_2\,Z_1)=(Z_2\,Z_1)\,Z_0\,
 \label{able3}
  \ee
 \be
 Z_1^{\dagger}\,Z_2=Z_2\,Z_1
 \label{able2}
 \ee
 \be
 Z_2^{\dagger}=Z_2
 \label{able1}
 \ee
(see Table Nr. 1 in \cite{PLA}).

These equations
guarantee the
unitarity of the evolution.
Directly, this feature of the system is only controlled by
the Hamiltonian-containing Eq.~(\ref{able3})
where one can immediately identify $Z_2\,Z_1\,\equiv\,\Theta_2(H)$.
Still,
the
role of the other two relations is also
related and nontrivial.
For
explanation
it makes sense to return to the ${\cal PT}-$symmetric
quantum mechanics of review \cite{Carl}
in which
a very specific parity-times-charge
factorization of the metric
 \be
 \Theta_2=
 {\cal PC}=Z_2\,Z_1=\Theta^\dagger_2
 \label{hols}
 \ee
has been introduced
and shown useful in applications.
On this background an understanding
of the amendments provided by the
$N=2$ formalism of Ref.~\cite{PLA}
lies in an enhancement of the economy of assumptions. Indeed,
among
relations (\ref{able3}) -  (\ref{able1})
one
does not find
the popular
${\cal PT}-$symmetry assumption (\ref{PTsymm})
even when,
in our present notation, re-written as the
${\cal P}-$pseudo-Hermiticity
condition $H^\dagger \,{\cal P}={\cal P}\,H$.
A minor but interesting generalization of the conventional
${\cal PT}-$symmetric
quantum mechanics is obtained. Even
though the relation between $H$ and ${\cal P}$
remains unspecified,
the ISP theory is consistent.
The operator ${\cal P}$ itself
is purely auxiliary, not carrying any immediate physical
meaning at all.

The latter non-observability
paradox
is caused by the fact that in the framework of ISP we have,
in general,
${\cal P}^\dagger\,\Theta_2\neq \Theta_2\,{\cal P}$.
Such a statement
looks counter-intuitive.
Still, its intuitive acceptability can be supported
by the well known observation that even the coordinate
itself is not an observable quantity in general \cite{arabky}.
Although the coordinate is often
assumed to be measurable,
the construction of its operator representation
appeared to be a difficult task
even in the most elementary QH square-well models
(see, e.g., an example of the construction in \cite{Batal}).
As a consequence, the generalized parity operator
constrained by
Hermiticity (\ref{able1})
is just a freely variable ``parameter''
specifying the quantum model
via its observables \cite{Geyer}.

Once we accept
the ISP Hamiltonian $Z_0$
as an observable which is given in advance, we
are prepared to specify
the necessary Hermitian metric (\ref{hols}) as
one of the eligible solutions of Eq.~(\ref{able3}).
This step being completed,
we are left with the single mathematical constraint (\ref{able2}).
Nevertheless, it is trivial to see that this is just
an identity, i.e., just an
equation which is equivalent to the
Hermiticity of the solution $\Theta_2(H)$ of Eq.~(\ref{able3}).
This means that the acceptable generalized charge $Z_1={\cal C}$
is not arbitrary, being
fully defined in terms of the metric and of the preselected parity,
 \be
 {\cal C}={\cal P}^{-1}\,\Theta_2\,,
 \ \ \ \ \ N=2\,.
 \label{chacha}
 \ee
The multiplication of the identity~(\ref{able2}) by $Z_1$
from the right yields the new, equivalent formula
 \be
 {\cal C}^{\dagger}\,\Theta_2=\Theta_2\,{\cal C}\,,
 \ \ \ \ \ N=2\,.
 \label{uhols}
 \ee
It says that operator (\ref{chacha})
called generalized charge is
not only ${\cal PT}-$symmetric (see Eq.~(\ref{able2}))
but also
quasi-Hermitian. It
{\em can\,} consistently be used as representing
a measurable physical quantity. In Eq.~(\ref{phph})
we may identify, in such a case,
$\Lambda_1=Z_1={\cal C}$.
Formula (\ref{chacha}) can be read as a sample of
Eq.~(\ref{insaal}) with $M\,\equiv\,{\cal P}^{-1}$.
Finally, an entirely analogous argument implies that
also the metric itself could play an analogous role of
another observable, $\Lambda_2=Z_2\,Z_1=\Theta_2$.
Once more, it is also possible to add $\Lambda_3=Z_0=H$.

We may conclude that at $N=2$,
a fully consistent unitary quantum system is obtained.
In terms of the input information represented by the ``suitable physical''
operator $H$ and its
``arbitrary mathematical'' operator-valued-parameter complement ${\cal P}$,
we are able to define the metric $\Lambda_2=\Theta_2(H)$
(as any solution of Eq.~(\ref{able3}))
and,
via Eq.~(\ref{chacha}), the charge,
$\Lambda_1={\cal C}$.
Such an explicit constructive specification
of the
operators of observables converts all of the
three obligatory consistency
constraints (\ref{able3}), (\ref{able2}) and (\ref{able1})
into mathematical identities.

%%\newpageNew metric-factorization scenarios with

%%\newpage

\section{Nonstandard quantum models with $N =3$\label{IVa}}

%\section{The lists of observables at $N > 2$\label{IVa}}

%\subsection{The simplest innovation: $N=3$}

Table Nr. 2 in \cite{EPJP} samples the one-to-one correspondence
between the factorization of metric
and the
unitarity of evolution of a quantum system in the
generalized Schr\"{o}dinger picture (GSP) using a preselected $N$.
At $N=3$, with the physical Hilbert-space metric $\Theta_3=Z_3\,Z_2\,Z_1$,
the internal consistency of the
GSP quantum theory has been shown guaranteed by the
quadruplet of constraints
 \be
 Z_3^{\dagger}=Z_3
 \label{bable1}
 \ee
 \be
 Z_2^{\dagger}\,Z_3=Z_3\,Z_2
 \label{bable2}
 \ee
  \be
  Z_1^{\dagger}\,(Z_3\,Z_2)=(Z_3\,Z_2)\,Z_1\,
 \label{bable3}
  \ee
  \be
  Z_0^{\dagger}\,\Theta_3=\Theta_3\,Z_0\,.
 \label{bable4}
  \ee
The physical meaning of the last item (\ref{bable4}) is obvious
because it guarantees the hidden Hermiticity of
our preselected Hamiltonian $H=Z_0$. Using
this equation one can determine and select one of the eligible
solutions $\Theta_3=\Theta_3(Z_0)$ of this equation
and assign it the role of the
correct physical Hilbert-space metric.

The first, simplest constraint (\ref{bable1}) is
imposed upon the
``generalized parity'' $Z_3={\cal P}$ which remains,
in a parallel to the previous $N=2$ scenario, unobservable.
This enables us to treat such a (necessarily, self-adjoint)
operator, as before,
as a carrier of a ``dynamical input information''.
Its unconstrained variability
can be used to characterize
the differences between
various phenomenologically non-equivalent quantum systems which are
sharing the same Hamiltonian (i.e., the same
form of the operator of energy).

The interpretation of the next constraint (\ref{bable2})
remained purely formal in \cite{EPJP}.
The product $Z_3Z_2$ has been identified there with a new operator
$Y_3$ which happens to be self-adjoint,
$Y_3=Y_3^\dagger$. This operator cannot be treated as an
observable because
$Y_3^\dagger\,\Theta_3 \neq \Theta_3\,Y_3$ in general.
It enters the GSP theory, therefore, simply as
another, freely variable operator parameter
of the model.
From this point of view, Eq.~(\ref{bable2})
degenerates to an identity in which,
in terms of the already available operators $Z_3$ and $Y_3$,
the unique ``unobservable quasiparity'' $Z_2={\cal Q}$
is defined as follows,
 $$
 {\cal Q}=Z_3^{-1}\,Y_3\,.
 $$
In the last step we are left with constraint (\ref{bable3}).
After its pre-multiplication by
(presumably, non-singular factor) $Z_1$ from the right it acquires the
equivalent form
 $$
 Z_1^\dagger\,\Theta_3=\Theta_3\,Z_1\,.
 $$
This confirms that $Z_1$ can represent an observable.
For the sake of definiteness
let us speak about a renormalized charge, $Z_1={\cal R}$.
As long as it is uniquely defined,
 \be
 {\cal R}=Y_3^{-1}\,\Theta_3\,
 \ee
we may recall the
list of the potentially eligible observables~(\ref{phph})
and identify $\Lambda_1={\cal R}$.

In such an upgraded notation our $N=3$ Eq.~(\ref{bable4})
yields the
solution $\Theta_3={\cal PQR}= \Theta_3(H)$
which is required self-adjoint, but this
constraint appears precisely equivalent to
our last relation~(\ref{bable3}).
In other words, the whole set of constraints (\ref{bable4}) - (\ref{bable3})
becomes
converted into identities.
In their light it is then easy to
identify the other two eligible observables, viz.,
$\Lambda_2={\cal QR}$ and
$\Lambda_3={\cal PQR}$ and, finally, $\Lambda_4=H$ - see \cite{EPJP}.

In the latter paper, unfortunately,
the scenarios with $N > 3$
remained unexplored. Here, we will
describe the
explicit constructive
identification of the potential operators of
the phenomenologically relevant observables at any preselected
number $N$ of factors in the general factorized metric of Eq.~(\ref{kamenK}).

%%\newpage

\section{Nonstandard quantum models with arbitrary $N$\label{IVb}}

In paper \cite{EPJP}
the quasi-Hermiticity property
  \be
  H^{\dagger}\,\Theta_N=\Theta_N\,H\,
 \label{deblesep}
  \ee
of a preselected Hamiltonian
$H=Z_0$ was assumed satisfied by the factorized
Hilbert-space metric
$\Theta_N=Z_N\,Z_{N-1}\,\ldots\,Z_2\,Z_1\,$ at an arbitrary $N$.
The separate factors $Z_j$ were required to
obey the set of the GSP theoretical consistency requirements
   \be
  Z_1^{\dagger}\,(Z_N\,\ldots\,Z_3\,Z_2)
  =(Z_N\,\ldots\,Z_3\,Z_2)\,Z_1\ \ (=A_N\,\equiv\,\Theta_N),
 \label{deble4}
  \ee
   \be
  Z_2^{\dagger}\,(Z_N\,\ldots\,Z_4\,Z_3)=(Z_N\,\ldots\,Z_4\,Z_3)\,Z_2\ \ (=B_N),
 \label{deble4b}
  \ee
$$\ldots$$
  \be
  Z_{N-2}^{\dagger}\,(Z_N\,Z_{N-1})=
  (Z_N\,Z_{N-1})\,Z_{N-2}\ \ (=X_N),
 \label{deble3}
  \ee
  \be
  Z_{N-1}^{\dagger}\,Z_N=Z_N\,Z_{N-1}\ \ (=Y_N),
 \label{deble3b}
  \ee
 \be
 Z_N^{\dagger}=Z_N\,.
 \label{deble1}
 \ee
Recalling the last item (\ref{deble1}) we deduce that $Y_N=Y_N^\dagger$
and, subsequently, that $X_N=X_N^\dagger$ and so on, until
$B_N=B_N^\dagger$ and $A_N=A_N^\dagger$. One of the
most important subtle consequences of this observation is
the following result.

\begin{thm}
\label{theoman}
The set (\ref{treat})
of conditions of the quantum-theoretical observability
is satisfied by
the set of the operator products
  $
  \Lambda_k=Z_k\,Z_{k-1}\,\ldots\,Z_2\,Z_1\,
  $
with $ k=1,2,\ldots,N\,$.
\end{thm}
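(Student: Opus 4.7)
The plan is to introduce the family of tail products $W_j := Z_N\,Z_{N-1}\,\cdots\,Z_j$ for $j=1,\ldots,N$ (with the convention $W_{N+1}:=I$), so that the metric itself is recovered as $\Theta_N = W_1$ and, more importantly, so that the proposed observable factorizes as $\Theta_N = W_{k+1}\,\Lambda_k$ for every $k=1,\ldots,N$. In these variables the whole chain (\ref{deble4})--(\ref{deble3b}) of consistency requirements acquires the uniform shape $Z_j^\dagger\,W_{j+1} = W_{j+1}\,Z_j$, which is the statement that each factor $Z_j$ is itself quasi-Hermitian with respect to its own auxiliary partner $W_{j+1}$.

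Next I would establish, by downward induction on $j$ running from $N$ down to $1$, that every $W_j$ is self-adjoint. The base step $W_N = W_N^\dagger$ is furnished verbatim by (\ref{deble1}) since $W_N = Z_N$. For the inductive step one computes $W_j^\dagger = (W_{j+1}\,Z_j)^\dagger = Z_j^\dagger\,W_{j+1}^\dagger = Z_j^\dagger\,W_{j+1} = W_{j+1}\,Z_j = W_j$, invoking in succession the inductive hypothesis $W_{j+1}=W_{j+1}^\dagger$ and the reformulated constraint. This is the step I expect to be the main obstacle: one has to recognize that the author's nested equations are precisely what is needed to propagate Hermiticity inward along the factorization chain, and that the natural ``outside-in'' bookkeeping provided by $W_j$ makes the recursion transparent; the same chain of self-adjointness is in fact already foreshadowed in the paragraph immediately preceding the theorem.

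With the Hermiticity of every $W_j$ secured, in particular of $\Theta_N = W_1$ and of the intermediate partner $W_{k+1}$, the conclusion follows in two short lines. Hermitian conjugation of the factorization $\Theta_N = W_{k+1}\,\Lambda_k$, combined with $\Theta_N^\dagger = \Theta_N$ and $W_{k+1}^\dagger = W_{k+1}$, yields the intertwining relation $\Lambda_k^\dagger\,W_{k+1} = W_{k+1}\,\Lambda_k$. Right-multiplication of this identity by $\Lambda_k$ then converts the right-hand side into $W_{k+1}\,\Lambda_k\,\Lambda_k = \Theta_N\,\Lambda_k$ and the left-hand side into $\Lambda_k^\dagger\,\Theta_N$, delivering exactly the observability criterion (\ref{treat}) for the proposed $\Lambda_k$, uniformly in $k=1,\ldots,N$.
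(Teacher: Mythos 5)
Your proposal is correct and follows essentially the same route as the paper: your tail products $W_{j}$ are precisely the paper's auxiliary operators $M_{j-1}$ (i.e., $A_N, B_N, \ldots, Y_N, Z_N$), your downward-induction proof of their self-adjointness is the chain $Z_N^\dagger=Z_N \Rightarrow Y_N=Y_N^\dagger \Rightarrow \cdots \Rightarrow A_N=A_N^\dagger$ stated just before the theorem, and your final step (the identity $\Lambda_k^\dagger W_{k+1}=\Theta_N=W_{k+1}\Lambda_k$ followed by right-multiplication by a $\Lambda$) is the paper's concluding manipulation, differing only in that the paper shifts the index to derive the condition for $\Lambda_{k+1}$ from the $k$-th relation. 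No gaps.
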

  \begin{proof}
First we notice that the right-hand-side of Eq.~(\ref{deble4})
is equal to the metric, $A_N=\Theta_N$. What is less obvious is that
after the
multiplication of the next relation (\ref{deble4b}) by
$Z^\dagger_1\,\equiv\,\Lambda_1^\dagger$ from the left
we may recall the previous identity and obtain the  metric as well,
$\Lambda_1^\dagger\,B_N=\Theta_N$. Similarly, we have
$\Lambda_2^\dagger\,C_N=\Theta_N$ etc., until
$\Lambda_{N-1}^\dagger\,Z_N=\Theta_N$.
Every such a relation
has the form
 $$
 \Lambda_k^\dagger\,M_k=\Theta_N\,,
 \ \ \ \ \ k=0,1,\ldots,N-1
 $$
where $M_0=A_N$, $M_1=B_N$, etc.,
and
where we use $\Lambda_0=I$ at $k=0$.
After we multiply
each of these equations by $\Lambda_{k+1}$ from the right,
we reveal that $\Lambda_0^\dagger\,M_0\,\Lambda_{1}=\Lambda_{1}^\dagger
\,\Theta_N$, $\Lambda_1^\dagger\,M_1\,\Lambda_{2}=\Lambda_{2}^\dagger
\,\Theta_N$ and, in general,
$\Lambda_k^\dagger\,M_k\,\Lambda_{k+1}=\Lambda_{k+1}^\dagger
\,\Theta_N$. In other words,
we reveal that the resulting sequence of equations coincides with
the QH conditions~(\ref{treat}).
  \end{proof}
From this result we may immediately deduce that
in the GSP formulation of quantum mechanics at a fixed $N$,
the system under consideration is characterized, first of all, by its
Hamiltonian $H$ (which is assumed observable) and by the Hilbert-space metric
which is, due to constraint~(\ref{deblesep})
(i.e., due to Eq.~(\ref{atreat}) at $k=0$), Hamiltonian-dependent,
$\Theta=\Theta_N(H)$.

In the light of theorem \ref{theoman}
the subscript $N$
characterizes not only the
number
of factors in the metric-operator ansatz (\ref{kamenK})
but also
the size of the
multiplet of the
invertible and freely variable self-adjoint
operator-parameters $Z_N, Y_N, X_N, \ldots, B_N$
of the model
as well the size of the multiplet of the
observables defined in closed form. This is our final result:
their
first few $N-$dependent lists are sampled here
in the form of the first few columns of Table~\ref{o3x}.

\begin{table}[h]
\caption{Observables $\Lambda_k$
available
for a given QH Hamiltonian $H$, integer $N$, Hilbert-space metric $\Theta_N(H)$
and for an $(N-1)-$plet of  invertible self-adjoint operator-parameters $Z_N, Y_N, X_N, \ldots,B_N$.
The first-line items $\Lambda_0$ are trivial since
$Y_2\, \equiv\,\Theta_2(H)$ at $N=2$, etc. At any $N$, we also recalled
the observability status of the ``input'' Hamiltonian and added the symbol
$\Lambda_{N+1}\, \equiv\,H$.
}
 \label{o3x} \vspace{.4cm}
\centering
\begin{tabular}{||c|ccccc||}
    \hline \hline
    $N$ & 2&3&4&5&\ldots
 \\
 \hline \hline
    &&&&&\\
  $\Lambda_0$& $I$& $I$&
   $I$& $I$&
  \ldots
   \\
    %$\Lambda_0$& $Y_2^{-1}\Theta_2(H)$& $X_3^{-1}\Theta_3(H)$&
%   $W_4^{-1}\Theta_4(H)$& $V_5^{-1}\Theta_5(H)$&
%  \ldots
%   \\
  $\Lambda_1$& $Z_2^{-1}\Theta_2(H)$& $Y_3^{-1}\Theta_3(H)$& $X_4^{-1}\Theta_4(H)$&
   $W_5^{-1}\Theta_5(H)$&
  \ldots
   \\
  $\Lambda_2$& $\ \ \ \Theta_2(H)$& $Z_3^{-1}\Theta_3(H)$& $Y_4^{-1}\Theta_4(H)$& $X_5^{-1}\Theta_5(H)$&
  \ldots
   \\
  $\Lambda_3$&$H$& $\ \ \ \Theta_3(H)$& $Z_4^{-1}\Theta_4(H)$& $Y_5^{-1}\Theta_5(H)$&\ldots
   \\
  $\Lambda_4$&-&$H$& $\ \ \ \Theta_4(H)$& $Z_5^{-1}\Theta_5(H)$& \ldots
   \\
  $\Lambda_5$&-&-&$H$& $\ \ \ \Theta_5(H)$& \ldots
   \\
   $\vdots$ &&& &$ \ddots\ \ \ $ $ \ \ \ $&$\ddots$
    \\
\hline \hline
\end{tabular}
\end{table}

%\newpage

%\section{ReIntroduction}

\section{Discussion\label{IV}}

A decisive technical merit of all of the existing
QH quantum models is that
their correct
physical Hilbert space ${\cal H}_{phys}$
is so easily represented in another,
extremely user-friendly
Hilbert space
${\cal H}_{math}$.
After the mere amendment
$\br \psi_a|\psi_b\kt\to \br \psi_a|\Theta|\psi_b\kt$ of the inner product
the experimental predictions of the
theory acquire the standard probabilistic form
which is, naturally, strictly equivalent to
its conventional (but, by assumption, prohibitively complicated)
textbook Schr\"{o}dinger-picture
alternative \cite{Geyer,ali}.

Technically, this means that
the unitary evolution
of a quantum system in question
has to be described
not only by the kets $|\psi(t)\kt$ and by
the corresponding
conventional time-dependent
Schr\"{o}dinger equation
 \be
 {\rm i}\frac{d}{dt}\,|\psi(t)\kt=H\,|\psi(t)\kt\,,
 \ \ \ \ \ |\psi(t)\kt \in {\cal H}_{math}
 \ee
but also, in parallel,
by the metric-multiplied kets
$|\psi(t)\kkt=\Theta_N\,|\psi(t)\kt$
with the evolution controlled
by another Schr\"{o}dinger equation in which
the Hamiltonian is replaced by
its Hermitian conjugate,
 \be
 {\rm i}\frac{d}{dt}\,|\psi(t)\kkt=H^\dagger\,|\psi(t)\kkt\,,
 \ \ \ \ \ |\psi(t)\kkt \in {\cal H}_{math}\,.
 \ee
In review \cite{ali}
it has been emphasized that
the use of the QH language
helped to elucidate several deep and
long-standing unresolved conceptual as well as technical puzzles, say,
in the field of relativistic
quantum mechanics or in quantum cosmology.
In several other physical contexts, unfortunately,
the abstract version of the QH theory
has been found
''very difficult to implement'',
with reasons explained on p. 1216
of Ref.~\cite{ali}.
In this context, our present letter is to be read as
the description of one of the
innovative simplification strategies.

In a final remark let us add that
the main starting-point technicality of the GSP approach, viz. the
choice of the positive-definite
solution $\Theta_N=\Theta_N(H)$
of the constraint (\ref{deblesep})
is, in general,
ambiguous \cite{SIGMAdva}.
In this sense,
any choice can be considered and used
as a starting point of the GSP-based metric-multiplication
strategy as summarized in Table \ref{o3x}.
Naturally, every such an initial selection of the
physics-determining
operator $\Theta_N(H)$
must
satisfy all of the obligatory mathematical properties
as listed
and discussed, say, in review \cite{Geyer}.
In \cite{PLA,EPJP} we emphasized that
among them a key role is played by the mathematical requirements
imposed upon the separate factors $Z_k$ of the metric.
In the present paper the emphasis has been shifted to the
physical aspects of these factors. We revealed that
the phenomenological
information carried by these factors
is given by Theorem \ref{theoman},
i.e., by their re-interpretation as factors in the
candidates for observables
 \be
  \Lambda_k=Z_k\,Z_{k-1}\,\ldots\,Z_2\,Z_1\,
  \ee
with $ k=1,2,\ldots,N\,$.

%\newpage
%
%
%\subsection*{Acknowledgements}
%
%The author is grateful to Excellence project P\v{r}F UHK
%2211/2022-2023 for the financial support.
%
%
%\subsection*{Data Availability Statement}
%
%Data sharing is not applicable to this article as no new data were
%created or analyzed in this study.
%
%
%\subsection*{Conflicts of Interest}
%
%The author declares no conflict of interest.
%
%
%

\newpage

\end{document}